\newtheorem{theorem}{Theorem}
\newtheorem{remark}{Remark}
\newsavebox\myboxA
\newsavebox\myboxB
\newlength\mylenA
\newcommand*\xoverline[2][0.75]{%
    \sbox{\myboxA}{$\m@th#2$}%
    \setbox\myboxB\null
    \ht\myboxB=\ht\myboxA%
    \dp\myboxB=\dp\myboxA%
    \wd\myboxB=#1\wd\myboxA
    \sbox\myboxB{$\m@th\overline{\copy\myboxB}$}
    \setlength\mylenA{\the\wd\myboxA}
    \addtolength\mylenA{-\the\wd\myboxB}%
    \ifdim\wd\myboxB<\wd\myboxA%
       \rlap{\hskip 0.5\mylenA\usebox\myboxB}{\usebox\myboxA}%
    \else
        \hskip -0.5\mylenA\rlap{\usebox\myboxA}{\hskip 0.5\mylenA\usebox\myboxB}%
    \fi}
\newtheorem{definition}{Definition}
\newtheorem{assumption}{Assumption}
\let\NAT@parse\undefined
\title{\LARGE \bf A Simple Robust MPC for Linear Systems with \\ Parametric and Additive Uncertainty}
\author{Monimoy Bujarbaruah$^1$, Ugo Rosolia$^2$, Yvonne R. St{\"u}rz$^1$, Francesco Borrelli$^1$  
\thanks{$^1$MPC Lab, UC Berkeley, CA 94720, USA; E-mails: \tt\scriptsize{\{monimoyb, y.stuerz, fborrelli\}@berkeley.edu.}} \thanks{$^2$Caltech, CA 91125, USA; E-mail: \tt\scriptsize{urosolia@caltech.edu.}}
}
\begin{document}

\maketitle
  \thispagestyle{empty}
\pagestyle{empty}

\begin{abstract}
We propose a simple and computationally efficient approach for designing a robust Model Predictive Controller (MPC) for constrained uncertain linear systems. The uncertainty is modeled as an additive disturbance and an additive error on the system dynamics matrices. Set based bounds for each component of the model uncertainty are assumed to be known. We separate the constraint tightening strategy into two parts, depending on the length of the MPC horizon. For a horizon length of one, the robust MPC problem is solved exactly, whereas for other horizon lengths, the model uncertainty is over-approximated with a net-additive component. The resulting MPC controller guarantees robust satisfaction of state and input constraints in closed-loop with the uncertain system. With appropriately designed terminal components and an adaptive horizon strategy, we prove the controller's recursive feasibility and stability of the origin. With numerical simulations, we demonstrate that our proposed approach gains up to 15x online computation speedup over a tube MPC strategy, while stabilizing about 98$\%$ of the latter's region of attraction.   
\end{abstract}


\section{Introduction}
Model Predictive Control (MPC) is an 
optimal control strategy that satisfies imposed constraints on system states and inputs \cite{mayne2000constrained,kouvaritakis2016model, borrelli2017predictive}. 
The presence of uncertainty in the prediction model is a key challenge in MPC design.  
For uncertain linear systems in presence of \emph{only} an additive disturbance in the system model, as finding the optimal policy is NP-hard, 
computationally tractable suboptimal robust MPC techniques such as tube MPC \cite{chisci2001systems, langson2004robust, Goulart2006, rakovic2012homothetic, kouvaritakis2016model, bujarbaruahAdapCDC18} have been widely utilized. The idea in these techniques is to restrict the input policy to the space of affine state feedback policies and then tightening the imposed constraints around a predicted nominal (i.e., certainty-equivalent) trajectory within a tube. This ensures that the realized system trajectory satisfies imposed constraints robustly for all disturbances in the system. 

Robust MPC design for uncertain linear systems in presence of \emph{both} a mismatch in the system dynamics matrices and an additive disturbance is computationally more intensive and is a topic of ongoing research \cite{rakovic2013homothetic, munoz2013recursively, slsmpc}. In order to design a computationally efficient classical shrinking or fixed radius tube MPC \cite[Chapter~3]{kouvaritakis2016model} 
in presence of mismatch in the system matrices, the contribution of uncertainty due to the mismatches can be lumped together with the additive disturbance. A worst-case bound for this quantity can be found and then a method such as \cite{Goulart2006} can be used. However, the work in \cite{dean2018safely} points out that such an approach can lead to overly conservative behavior, which they 
circumvent by utilizing a System Level Synthesis based approach. 

In this paper we show that such a naive ``net-additive" uncertainty approach may not always lead to overly conservative behavior over \cite{dean2018safely}, if the terminal constraints are appropriately chosen and an adaptive horizon strategy is adopted. Our method can also be used to obtain a single roll-out policy for robust constraint satisfaction, without solving the MPC problem repeatedly. 
Our key contributions are:
\begin{itemize}
    \item We split the constraint tightening into two cases based on the horizon length. For horizon of one, the robust MPC problem is solved exactly. For larger horizons, we lump the model uncertainty into a net-additive component and compute constraint tightenings along the prediction horizon based on its worst-case bound. 
    
    \item We solve a set of tractable convex optimization problems online using an adaptive horizon approach for the MPC controller synthesis. With an appropriately constructed terminal set and a terminal cost we prove recursive feasibility of the controller synthesis problem in closed-loop and input to state stability of the origin.
    
    \item We numerically compare our proposed robust MPC approach 
    with the tube MPC from \cite{langson2004robust} and the constrained LQR algorithm from \cite{dean2018safely}. 
    In the first case, we gain up to 15x speedup of online control computations while stabilizing approximately 98$\%$ of the tube MPC's region of attraction (ROA). In the latter case, our approach obtains an up to 12x larger ROA with the open-loop roll-out policy.  
\end{itemize}
\subsection*{Notation}
The induced $p$-norm of any matrix $A$ is given by $\Vert A\Vert_p = \sup_{x \neq 0} \frac{\Vert Ax\Vert_p}{\Vert x\Vert_p}$, where $\Vert \cdot \Vert_p$ is the $p$-norm of a vector. 
The sign $u \geq v$ between two vectors $u,v$ denotes element-wise inequality. 
The convex combination of the matrices $X, Y$ is denoted as $\mathrm{conv}(X, Y)$. $\mathcal{A} \oplus \mathcal{B}$ denotes the Minkowski sum of the two sets $\mathcal{A}$ and $\mathcal{B}$. $A \otimes B$ denotes Kronecker product. 
$I_n$ denotes an identity matrix of size $n$. 
Consistency property is $\Vert Xy\Vert_q \leq \Vert X\Vert_p \Vert y\Vert_q$, for any matrix $X$ and vector $y$.

\section{Problem Formulation}\label{sec:prob}
We consider the linear system 
\begin{equation}\label{eq:unc_system}
     x_{t+1} = A x_t + B u_t + w_t,~x_0 = x_S,
\end{equation}
where $x_t\in \mathbb{R}^{d}$ is the state and $u_t\in\mathbb{R}^{m}$ is the input at time step $t$, and $A$ and $B$ are system dynamics matrices of appropriate dimensions. We assume that $A$ and $B$ are unknown matrices with estimates $\bar{A}$ and $\bar{B}$ available to the control designer. In particular, we consider
\begin{align}\label{eq:matrix_errors}
    & A = \bar{A} + \Delta^\mathrm{tr}_A,~B = \bar{B} + \Delta^\mathrm{tr}_B,
\end{align}
where the true parametric uncertainty matrices $\Delta^\mathrm{tr}_A$ and $\Delta^\mathrm{tr}_B$ are unknown and belong to convex and compact sets \begin{align}\label{err_in_sets_pol}
    & \Delta^\mathrm{tr}_A \in \mathcal{P}_A,~\Delta^\mathrm{tr}_B \in \mathcal{P}_B.
\end{align}
Furthermore, we consider that the sets $\mathcal{P}_A$ and $\mathcal{P}_B$ are convex hulls of known \emph{vertex} matrices $\{\Delta_A^{(1)}, \Delta_A^{(2)}, \dots, \Delta_A^{(n_a)}\}$ and $\{\Delta_B^{(1)}, \Delta_B^{(2)}, \dots, \Delta_B^{(n_b)}\}$, with fixed $n_a, n_b >0$:
\begin{subequations}\label{eq:pol_out_termset_cond}
\begin{align}
    & \mathcal{P}_A = \mathrm{conv}(\Delta_A^{(1)}, \Delta_A^{(2)}, \dots, \Delta_A^{(n_a)}),\\
    & \mathcal{P}_B = \mathrm{conv}(\Delta_B^{(1)}, \Delta_B^{(2)}, \dots, \Delta_B^{(n_b)}).
\end{align}
\end{subequations}
System~\eqref{eq:unc_system} is also affected by a disturbance $w_t$ with a convex and compact support $\mathbb{W} \subset \mathbb{R}^{d}$, at all times $t \geq 0$.

We are interested in synthesizing a robust MPC for \eqref{eq:unc_system}, by repeatedly solving the following optimal control problem:
\begin{subequations}\label{eq:generalized_InfOCP}
	\begin{align}
		\displaystyle\min_{U_t(\cdot)} ~ & \displaystyle\sum\limits_{k = t}^{t+N-1} \ell \left( \bar{x}_{k|t}, u_{k|t}\left(\bar{x}_{k|t}\right) \right)+Q(\bar{x}_{t+N|t}) \label{eq:FTOCP_cost} \\
		\text{s.t.,} & ~~~ \bar{x}_{k+1|t} = \bar{A} \bar{x}_{k|t} + \bar{B} u_{k|t}(\bar{x}_{k|t}), \label{eq:FTOCP_nominal} \\
		& ~~~x_{k+1|t} = Ax_{k|t} + Bu_{k|t}(x_{k|t}) + w_{k|t}, \label{eq:FTOCP_trueModel} \\
		&~~~\textnormal{with } A = \bar{A} + \Delta_A, B = \bar{B} + \Delta_B, \label{eq:FTOCP_modelUncertanty}\\[1ex]
		&~~~ H^x x_{k|t} \leq h^x,
		H^u u_{k|t}(x_{k|t}) \leq h^u, \label{eq:FTOCP_constr} \\
		&~~~x_{t+N|t} \in \mathcal{X}_N, \label{FTOCP_termC}\\[1ex]
		&~~~\forall w_{k|t} \in \mathbb W,~\forall \Delta_A \in \mathcal{P}_A,~\forall \Delta_B \in \mathcal{P}_B, \label{eq:FTOCP_uncertanty} \\
		&~~~\forall k \in \{t,t+1,\dots,(t+N-1)\}, \nonumber\\
		&~~~x_{t|t}=\bar{x}_{t|t} = x_t \nonumber,
	\end{align}
\end{subequations}
with $U_t(\cdot) = \{u_{t|t},u_{t+1|t}(\cdot),\dots,u_{t+N-1|t}(\cdot)\}$, and applying the optimal MPC policy
\begin{align}\label{eq:mpc_pol_formulation}
    u^\mathrm{MPC}_t(x_t) = u^\star_{t|t}(x_t),
\end{align}
to system \eqref{eq:unc_system} in closed-loop, where $x_{k|t}$ is the predicted state at time step $k$ for any possible uncertainty realization, obtained by applying the predicted input policies $\{u_{t|t},u_{t+1|t}(\cdot),\dots,u_{k-1|t}(\cdot)\}$ to system~\eqref{eq:unc_system}, and $\{\bar{x}_{k|t}, \bar{u}_{k|t}\}$ with $\bar{u}_{k|t} = u_{k|t}(\bar{x}_{k|t})$ denote the nominal state and corresponding input respectively. 
The constraints~\eqref{eq:FTOCP_constr}-\eqref{FTOCP_termC} are satisfied for all uncertainty realizations in~\eqref{eq:FTOCP_uncertanty}, where $H^x \in \mathbb{R}^{s \times d}, h^x \in \mathbb{R}^s, H^u \in \mathbb{R}^{o \times m}$ and $h^u \in \mathbb{R}^o$ parametrize \emph{compact} sets. Finally, the stage cost $\ell( x, u) = x^\top P x + u^\top R u$, and the terminal cost $Q(x) = x^\top P_N x$. 
The main challenges with solving \eqref{eq:generalized_InfOCP} are:
\begin{enumerate}[(A)]
    \item  The state and input constraints are to be satisfied robustly under the presence of mismatch in the system dynamics matrices and disturbances. That is, \eqref{eq:FTOCP_constr}-\eqref{FTOCP_termC}-\eqref{eq:FTOCP_uncertanty} need to be reformulated for a numerical algorithm.
    \item Optimizing over policies $\{u_0, u_1(\cdot), u_2(\cdot), \dots\}$ in \eqref{eq:generalized_InfOCP} 
    is not tractable in general for constrained linear systems.
    \item The feasibility of problem \eqref{eq:generalized_InfOCP} is to be guaranteed robustly at all time steps $t \geq 0$. 
    That is, 
    \begin{align*}
    &H^x x_t \leq h^x, H^u u^\mathrm{MPC}_t(x_t) \leq h^u,~\forall w_t \in \mathbb{W},\forall t \geq 0,
    \end{align*}
where $x_{t+1} = Ax_t + Bu^\mathrm{MPC}_t(x_t) + w_t$.
\end{enumerate}
Methods addressing Challenge (B) and Challenge (C) are well established in MPC literature. 
In the following sections, we show how we address Challenge (A). 
\vspace{3pt}\\
\noindent \textbf{Approach Insight}:  We lump the component of model mismatch together with the additive disturbance into a ``net-additive" uncertainty. 
We design a simple and computationally efficient shrinking tube MPC leveraging worst-case bounds of this net-additive uncertainty \emph{only} along the prediction horizon, and an \emph{exact} system uncertainty representation for the construction of the terminal set.
Notice that shrinking tube MPC strategies such as \cite{Goulart2006,chisci2001systems} using the net-additive uncertainty bounds \textit{both} along the prediction horizon and for the computation of the terminal set, can be extremely conservative as pointed out in \cite{rawlings2009model, dean2018safely}. Therefore, numerous alternative strategies such as polytopic, homothetic and elastic tube MPC \cite{langson2004robust,munoz2013recursively,rakovic2013homothetic} have been introduced to lower this conservatism by circumventing the net-additive uncertainty bounds. This however increases the online computation times of these algorithms \cite{kouvaritakis2016model,slsmpc}.
In Section~\ref{sec:numerics}, we show with numerical simulations that our approach balances the trade-off between conservatism and computational complexity. In the example under study, we obtain about 15x online computation speedup over the polytopic tube MPC method of \cite{langson2004robust}, while stabilizing about 98$\%$ of its region of attraction.
\section{Robust MPC Design}\label{sec:mpc}
In this section we present the steps of the proposed robust MPC design approach, which solves problem \eqref{eq:generalized_InfOCP} at all $t \geq 0$. 
\subsection{Net-Additive Uncertainty Representation}\label{ssec:net_add}
We lump the effect of the parametric uncertainty and the additive disturbance into an augmented disturbance $\tilde{w}_t$. We denote $\tilde{w}_t = \Delta_A x_t + \Delta_B u_t + w_t$, and $\Vert \tilde{w}_t\Vert \leq \tilde{w}_\mathrm{max}$, for all $t \geq 0$, with the bound $\tilde{w}_\mathrm{max} = \max_{t \geq 0}\Vert \tilde{w}_t \Vert$ computed as:
\begin{subequations}\label{bounds_eq}
\begin{align}
   \max_{t \geq 0} \Vert \tilde{w}_t \Vert & \leq \max_{t \geq 0} (\Vert \Delta_A x_t \Vert + \Vert \Delta_B u_t \Vert + \Vert w_t \Vert), \label{dboundTr}\\
    & \!\!\! \! \!\!\! \!\! \! \!\! \leq \max_{t \geq 0} (\Vert \Delta_A \Vert_p \Vert x_t \Vert + \Vert \Delta_B \Vert_p \Vert u_t \Vert + \Vert w_t \Vert), \label{dboundCon}\\
    & \!\!\! \! \!\! \! \!\! \! \!\! = \Vert \Delta_A \Vert_p \Vert x \Vert_\mathrm{max} + \Vert \Delta_B \Vert_p \Vert u \Vert_\mathrm{max} + \Vert w \Vert_\mathrm{max}, \label{dboundMax}\\
    & \!\!\! \! \!\! \! \!\! \! \!\! = \tilde{w}_\mathrm{max}, \nonumber
\end{align}
\end{subequations}
with $\Delta_A \in \mathcal{P}_A$ and $\Delta_B \in \mathcal{P}_B$. In \eqref{dboundTr} we have used the triangle inequality and in \eqref{dboundCon} the consistency property of induced norms. Values of $\Vert x \Vert_\mathrm{max}$, $\Vert u \Vert_\mathrm{max}$ and $\Vert w \Vert_\mathrm{max}$ in \eqref{dboundMax} can be obtained from compact constraints \eqref{eq:FTOCP_constr} and $\mathbb{W}$.

\subsection{Control Policy Parametrization}\label{sec:pol_par}
To address Challenge (B), for all predicted steps $k \in \{t,t+1,\dots,t+N-1\}$ over the MPC horizon, the control policy is chosen as \cite{lofberg2003minimax,Goulart2006}:
\begin{equation}\label{eq:inputParam_DF_OL}
	u_{k|t}(x_{k|t}) = \sum \limits_{l=t}^{k-1}M_{k,l|t} \tilde{w}_{l|t}  + \bar{u}_{k|t},
\end{equation}
where $M_{k|t}$ are the \emph{planned} feedback gains at time step $t$ and $\bar{u}_{k|t} = u_{k|t}(\bar{x}_{k|t})$ are the nominal inputs. Then the sequence of predicted inputs can be written as $\mathbf{u}_t = \mathbf{M}^{(N)}_t \tilde{\mathbf{w}}_t + \mathbf{\bar{u}}^{(N)}_t$, where $\mathbf{M}^{(N)}_t\in \mathbb{R}^{mN \times dN}$ and $\mathbf{\bar{u}}^{(N)}_t \in \mathbb{R}^{mN}$ are
\begin{equation*}
\begin{aligned}
 &   \mathbf{M}^{(N)}_t  =  \begin{bmatrix}0& \dots&\dots&0\\
  M_{t+1,t}& 0 & \dots & 0\\
  \vdots &\ddots& \ddots &\vdots\\
  M_{t+N-1,t}& \dots& M_{t+N-1,t+N-2}& 0
  \end{bmatrix},\\
 &  \mathbf{\bar{u}}^{(N)}_t = [\bar{u}_{t|t}^\top,\bar{u}_{t+1|t}^\top , \dots, \bar{u}_{t+N-1|t}^\top]^\top,
\end{aligned}
\end{equation*}
and 
\begin{equation*}
\begin{aligned}
& \mathbf{u}_t = [{u}^\top_{t|t}, {u}^\top_{t+1|t}(\cdot), \dots, {u}^\top_{t+N-1|t}(\cdot)]^\top,\\ 
& \tilde{\mathbf{w}}_t = \begin{bmatrix}
     \tilde{w}_{t|t}^\top &
     \tilde{w}_{t+1|t}^\top &
    \dots 
     \tilde{w}_{t+N-1|t}^\top
    \end{bmatrix}^\top,
\end{aligned}
\end{equation*}
with $\Vert \tilde{\mathbf{w}}_t \Vert \leq \tilde{\mathbf{w}}_\mathrm{max}$ for all $t \geq 0$. 
\subsection{Terminal Set Construction}\label{ssec:term_set}
We present the construction of the terminal set $\mathcal{X}_N$ in this section to address Challenge (C) mentioned in Section~\ref{sec:mpc}. Consider a linear state feedback policy for constructing $\mathcal{X}_N$
\begin{align}\label{eq:term_pol}
    \kappa_N(x) = Kx,
\end{align}
where $K \in \mathbb{R}^{m \times d}$ is the feedback gain. Recall the sets $\mathcal{P}_A$ and $\mathcal{P}_B$ from \eqref{eq:pol_out_termset_cond}. We define 
\begin{align*}
    &\mathcal{P}_{A_\Delta} = \{A_m: A_m = \bar{A} + \Delta_A,~\forall \Delta_A \in \mathcal{P}_A\},\\
    & \mathcal{P}_{B_\Delta} = \{B_m: B_m = \bar{B} + \Delta_B,~\forall \Delta_B \in \mathcal{P}_B\}. 
\end{align*}
Under policy \eqref{eq:term_pol}, the  closed-loop system dynamics matrix considered for constructing the terminal set satisfies
\begin{align*}
A^\mathrm{cl}=A + BK \in \mathcal{P}_{A_\Delta} \oplus \mathcal{P}_{B_\Delta}K. 
\end{align*}

\begin{assumption}\label{assump:stable}
$A^\mathrm{cl}_m = (A_m + B_mK)$ is stable for all $A_m \in \mathcal{P}_{A_\Delta}$ and $B_m \in \mathcal{P}_{B_\Delta}$. 
\end{assumption}

Using Assumption~\ref{assump:stable}, the terminal set ${\mathcal{X}}_N$ can then be computed as the maximal robust positive invariant set for 
\begin{align*}
    x_{t+1} = (A_m+B_mK) x_t + w_t,
\end{align*}
for all $A_m \in \mathcal{P}_{A_\Delta}, B_m \in \mathcal{P}_{B_\Delta}$, and for all $w_t \in \mathbb{W}$. That is for all $x\in \mathcal{X}_N$ we have that
\begin{align}\label{eq:term_set_DF}
   & H_x x \leq h_x,~H_u Kx \leq h_u\text{ and }(A_m + B_mK)x + w \in \mathcal{X}_N, \nonumber \\
    &\forall A_m \in \mathcal{P}_{A_\Delta},~\forall B_m \in \mathcal{P}_{B_\Delta},~\forall w \in \mathbb{W}.
\end{align}
\subsection{MPC Problem with Adaptive Horizon}\label{ssec:mpc_problem}
We now present the MPC reformulation of \eqref{eq:generalized_InfOCP} which guarantees recursive feasibility and Input to State Stability.
Note, the terminal set $\mathcal{X}_N$ is robustly invariant to all uncertainty of the form: $\forall \Delta_A \in \mathcal{P}_A,~\forall \Delta_B \in \mathcal{P}_B,~\forall w \in \mathbb{W},~\forall t \geq 0$, when the state feedback policy $\kappa_N(x) = Kx$ is used in closed-loop with system~\eqref{eq:unc_system}. However, along the prediction horizon we synthesize bound \eqref{bounds_eq} using more conservative tightenings from H{\"o}lder's and triangle inequalities, and the induced norm consistency property. Thus the uncertainty bounds along the horizon over-approximate the effect of the true uncertainty used to compute the  terminal set. This implies that the classical shifting argument \cite[Chapter~12]{borrelli2017predictive} for recursive MPC feasibility cannot be used. To resolve this issue, we solve a set of $N$ convex optimization problems at any $t$ for control synthesis, with the prediction horizon
$N_t \in \{1,2,\dots,N\}$. 
If one of these $N$ problems is feasible at time step $0$, we guarantee feasibility of at least one of them for all $t \geq 0$.

We first use policy \eqref{eq:inputParam_DF_OL} to reformulate the robust state constraints in \eqref{eq:generalized_InfOCP} along and at the end of the prediction horizon. Let the terminal set $\mathcal{X}_N$ in \eqref{eq:term_set_DF} be defined by 
$\mathcal{X}_N = \{x: H^x_N x \leq h^x_N\}$, with $H^x_N \in \mathbb{R}^{r \times d}, h^x_N \in \mathbb{R}^{r}$. For a horizon length of $N_t$, we denote matrices $\mathbf{F}^x = \mathrm{diag}(I_{N_t-1} \otimes H^x, H^x_N) \in \mathbb{R}^{(s(N_t-1)+r) \times dN_t}$ and $\mathbf{f}^x = [(h^x)^\top, (h^x)^\top, \dots, (h_N^x)^\top ]^\top \in \mathbb{R}^{s(N_t-1)+r}$. Also denote the set $\tilde{\mathbf{W}} = \{\tilde{\mathbf{w}} \in \mathbb{R}^{dN_t}: \Vert \tilde{\mathbf{w}}_t \Vert \leq \tilde{\mathbf{w}}_\mathrm{max}\}$. Then we consider the following two cases as\footnote{Note that the dimensions of $\mathbf{F}^x$, $\mathbf{f}^x$, $\bar{\mathbf{A}}$, $\mathbf{C}$, $\mathbf{G}$ and $\tilde{\mathbf{w}}_t$ vary depending on $N_t$. We omit showing this dependence explicitly for brevity.}:
\begin{subequations}\label{eq:state_robcon1}
\begin{align}
& \textnormal{\textbf{Case 1: $N_t = 1$:}} \nonumber \\ 
& \max_{\substack{{w}_t \in \mathbb{W}\\ \Delta_A \in \mathcal{P}_A \\ \Delta_B \in \mathcal{P}_B}}  H_N^x(\bar{{A}}+{\Delta}_A) x_t + (\bar{{B}} + {\Delta}_B) \bar{\mathbf{u}}^{(1)}_t + {w}_t) \leq h_N^x, \label{n1} \\
& \textnormal{\textbf{Case 2: $N_t \geq 2$:}} \nonumber \\
& \max_{\tilde{\mathbf{w}}_t \in \tilde{\mathbf{W}}}  \mathbf{F}^x \Big (  \bar{\mathbf{A}} x_t + \mathbf{C} \bar{\mathbf{u}}^{(N_t)}_t + (\mathbf{C}\mathbf{M}_t^{(N_t)} + \mathbf{G}) \tilde{\mathbf{w}}_t \Big ) \leq \mathbf{f}^x, \label{ng1}
\end{align}
\end{subequations}
where matrices $\bar{\mathbf{A}}, \mathbf{C}$ and $\mathbf{G}$ are defined in the Appendix. 
\begin{remark}
In \eqref{n1} we exactly propagate the system uncertainty for robustification. This ensures the feasibility of \eqref{n1} inside $\mathcal{X}_N$, which is a robust positive invariant set computed from \eqref{eq:term_set_DF} also using the exact uncertainty representation. As such uncertainty propagation is computationally intense over multi step predictions, in \eqref{ng1} we over-approximate system uncertainty using bounds \eqref{bounds_eq}.  
\end{remark}

Now, denote the matrices $\mathbf{H}^u = I_{N_t} \otimes H^u \in \mathbb{R}^{oN_t \times m N_t}$, and $\mathbf{h}^u = [({h}^u)^\top, ({h}^u)^\top, \dots, ({h}^u)^\top]^\top \in \mathbb{R}^{o N_t}$. Once the state constraints are formulated, the input constraints in \eqref{eq:generalized_InfOCP} along the prediction horizon can be written as:
\begin{align}\label{eq:input_robcon} 
    & \max_{\tilde{\mathbf{w}}_t \in \tilde{\mathbf{W}}}  \mathbf{H}^u \Big ( \mathbf{M}^{(N_t)}_t \tilde{\mathbf{w}}_t + \bar{\mathbf{u}}^{(N_t)}_t \Big) \leq \mathbf{h}^u, 
\end{align}
for $N_t \in \{1,2,\dots, N\}$. Using \eqref{eq:state_robcon1}-\eqref{eq:input_robcon},  we solve at any $t$: 
\begin{equation}\label{eq:MPC_R_fin_trac}
	\begin{aligned} 
	  &V_{t \rightarrow t+N_t}^{\mathrm{MPC}}(x_t, N_t)   :=  \\
	& \min_{\mathbf{M}^{(N_t)}_t, \bar{\mathbf{u}}^{(N_t)}_t} ~ 
	\begin{bmatrix}(\bar{\mathbf{x}}^{(N_t)}_t)^\top & (\bar{\mathbf{u}}^{(N_t)}_t)^\top \end{bmatrix} \bar Q^{(N_t)} \begin{bmatrix} \bar{\mathbf{x}}^{(N_t)}_t \\ \bar{\mathbf{u}}^{(N_t)}_t\end{bmatrix} \\
	& ~~~~~~\text{s.t., }~~~~~ \bar{\mathbf{x}}^{(N_t)}_t = \bar{\mathbf{A}} x_t + \mathbf{C} \bar{\mathbf{u}}^{(N_t)}_t, \\ 
       & ~~~~~~~~~~~~~~~~ \eqref{n1},\eqref{eq:input_robcon}~\textnormal{if $N_t = 1$, else } \eqref{ng1},\eqref{eq:input_robcon},\\
        &~~~~~~~~~~~~~~~~ \forall k = \{t,t+1,\dots,t+N_t-1\},  \\
        &~~~~~~~~~~~~~~~~\bar{x}_{t|t} = x_t,
	\end{aligned}
\end{equation}
for $N_t \in \{1,2,\dots, N\}$, where $\bar Q^{(N_t)} = \text{diag}(I_{N_t} \otimes P, P_N, I_{N_t} \otimes R)$. We reformulate \eqref{eq:MPC_R_fin_trac} as a convex program with standard duality arguments.  After solving \eqref{eq:MPC_R_fin_trac} for $N_t\in\{1,2,\dots, N\}$, we set
\begin{equation}\label{eq:nstar}
    N^\star_t = \arg \min_{\bar N \in \{1,2,\dots, N\}} V_{t \rightarrow t+N_t}^{\mathrm{MPC}}(x_t, \bar N).
\end{equation}
Afterwards, we pick the solution associated with $N^\star_t$, and apply the corresponding optimal input
\begin{align}\label{eq:cl_control}
    u^\star_{t|t}(x_t) = u^\star_t(x_t) = \bar{u}^\star_{t|t},
\end{align}
to system \eqref{eq:unc_system}, with $V^\mathrm{MPC}_{t \rightarrow t+N^\star_t}(x_t, N^\star_t) = J^\star(x_t)$.
We then resolve \eqref{eq:MPC_R_fin_trac} at $(t+1)$ for $N_{t+1} \in \{1,2,\dots,N\}$. 
\section{Feasibility and Stability}\label{sec:feas}
In this section we prove the feasibility and stability properties of the proposed robust MPC. 
\subsection{Feasibility}
\begin{theorem}\label{thm1}
Consider the closed-loop system \eqref{eq:unc_system} and \eqref{eq:cl_control}. Let problem \eqref{eq:MPC_R_fin_trac} be feasible at time step $t=0$ for some horizon length $N_t \in \{1,2,\dots,N\}$. Then problem~\eqref{eq:MPC_R_fin_trac} is feasible at all time steps $t\geq 1$ for some  horizon length $N_t \in \{1,2,\dots,N\}$, possibly time-varying.
\end{theorem}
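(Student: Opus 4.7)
The plan is to argue by induction on $t$: the base case $t=0$ is the hypothesis, so it suffices to exhibit, given feasibility at $t$ for some $N_t^\star\in\{1,\dots,N\}$, a feasible candidate at $t+1$ for some $N_{t+1}\in\{1,\dots,N\}$. A horizon-preserving shift alone is not enough, because \eqref{ng1} is the conservative net-additive formulation while the terminal set \eqref{eq:term_set_DF} and the one-step formulation \eqref{n1} are built from the exact parametric description; the worst-case tail under \eqref{ng1} cannot in general be absorbed by $\mathcal{X}_N$. The adaptive horizon in \eqref{eq:nstar} is precisely the mechanism I would use to bridge this mismatch, and the candidate I construct differs depending on whether $N_t^\star=1$ or $N_t^\star\geq 2$.

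If $N_t^\star=1$, the one-step constraint \eqref{n1} together with the terminal-set definition \eqref{eq:term_set_DF} under Assumption~\ref{assump:stable} yields $x_{t+1}\in\mathcal{X}_N$ for every realization of $(\Delta_A^{\mathrm{tr}},\Delta_B^{\mathrm{tr}},w_t)$. I would then pick $N_{t+1}=1$ with the terminal-feedback candidate $\bar u_{t+1|t+1}=Kx_{t+1}$ from \eqref{eq:term_pol}; robust positive invariance of $\mathcal{X}_N$ under $K$ immediately certifies the state constraint, the input constraint \eqref{eq:input_robcon} (which for $N_t=1$ collapses to $H^u\bar u_{t+1|t+1}\leq h^u$), and the terminal constraint \eqref{n1} at $t+1$.

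If $N_t^\star\geq 2$, I would pick $N_{t+1}=N_t^\star-1$ and construct a shifted affine-disturbance-feedback candidate: set $\tilde w_t:=x_{t+1}-\bar A x_t-\bar B\bar u_{t|t}^\star$, which obeys $\|\tilde w_t\|\leq \tilde w_\mathrm{max}$ by the chain \eqref{bounds_eq}; take $M_{k,l|t+1}:=M_{k,l|t}^\star$ and $\bar u_{k|t+1}:=\bar u_{k|t}^\star+M_{k,t|t}^\star \tilde w_t$ for the shifted indices. A direct substitution shows that, for any admissible future $(\tilde w_{t+1|t+1},\dots,\tilde w_{t+N_t^\star-1|t+1})\in\tilde{\mathbf{W}}$, the closed-loop inputs and predicted states generated from $x_{t+1}$ coincide with those the $t$-policy would generate under the disturbance tuple $(\tilde w_t,\tilde w_{t+1|t+1},\dots)$, which was already admissible at $t$. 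Hence \eqref{ng1}--\eqref{eq:input_robcon} at $t+1$ are a slice of the constraints verified at $t$, and when $N_{t+1}\geq 2$ the terminal constraint is inherited the same way. The main obstacle is the sub-case $N_t^\star=2$, $N_{t+1}=1$, where the terminal description switches from the conservative \eqref{ng1} to the exact \eqref{n1}; I would close this by observing that for any $(\Delta_A,\Delta_B,w_{t+1})$, the induced $\tilde w_{t+1}:=\Delta_A x_{t+1}+\Delta_B\bar u_{t+1|t+1}+w_{t+1}$ still obeys $\|\tilde w_{t+1}\|\leq\tilde w_\mathrm{max}$ by \eqref{bounds_eq}, so the exact successor $\bar A x_{t+1}+\bar B\bar u_{t+1|t+1}+\tilde w_{t+1}$ is one of the states shown at $t$ under \eqref{ng1} (with $N_t^\star=2$) to land in $\mathcal{X}_N$. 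Thus \eqref{n1} holds at $t+1$ for $N_{t+1}=1$ and the induction closes.
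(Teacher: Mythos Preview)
Your proof is correct and follows essentially the same two-case approach as the paper: for $N_t^\star=1$ use the terminal feedback $Kx_{t+1}$ with horizon $N_{t+1}=1$, and for $N_t^\star\ge 2$ shift the disturbance-feedback policy to horizon $N_{t+1}=N_t^\star-1$. Your explicit treatment of the transition sub-case $N_t^\star=2\to N_{t+1}=1$, where the terminal constraint switches from the net-additive formulation \eqref{ng1} to the exact formulation \eqref{n1}, is actually more careful than the paper's, which subsumes it under ``standard shifting arguments''; your observation that the exact one-step successor is always realized by some admissible $\tilde w_{t+1}$ with $\|\tilde w_{t+1}\|\le\tilde w_\mathrm{max}$ (since $x_{t+1}$ and the candidate input satisfy the state and input constraints used in \eqref{bounds_eq}) is precisely what closes that gap.
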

\begin{proof}
See Appendix.  
\end{proof}
\subsection{Stability}
To prove the stability of the origin in closed-loop, we first introduce the following set of assumptions and definitions.
\begin{assumption}\label{assump:orig_in}
Denote the set of state and input constraints in \eqref{eq:FTOCP_constr} as $\mathcal{X}$ and $\mathcal{U}$, respectively. We assume that the convex, compact sets $\mathcal{X}, \mathcal{U}$ and $\mathbb{W}$ contain the origin in their interior. 
\end{assumption}
\begin{definition}[$N$-Step Robust Controllable Set]\label{def:RobustPre} Given a control policy $\pi(\cdot)$ and the closed-loop system $x_{t+1} = A x_t + B\pi(x_t) + w_t$ with $w_t \in \mathbb{W}$ for all $t\geq 0$, we recursively define the $N$-Step Robust Controllable set to the set $\mathcal{S}$ as 
\begin{equation*}
\begin{aligned}
    & \mathcal{C}_{t\rightarrow t+k+1}(\mathcal{S}) = \mathrm{Pre}(\mathcal{C}_{t\rightarrow t+k}(\mathcal{S}), A, B, \mathbb{W}, \pi(\cdot)) \cap \mathcal{X},\\[1ex]
    & \textnormal{with } \mathcal{C}_{t\rightarrow t}(\mathcal{S})=\mathcal{S},~\textnormal{for $k=\{0, 1, \dots, N-1 \}$},
\end{aligned}
\end{equation*}
where $\mathrm{Pre}(\mathcal{S},A,B,\mathbb{W}, \pi(\cdot))$ defines the set of states of the system $x_{t+1} = A x_t + B\pi(x_t) + w_t$, which evolve into the target set $\mathcal{S}$ in one time step for all $w_t \in \mathbb{W}$.
\end{definition}
An algorithm to compute an inner approximation of such a set is presented in \cite{rosolia2019robust, bujarbaruah2020robust}, which we call the approximate $N$-Step Robust Controllable Set.
\begin{definition}[ROA of the Robust MPC]\label{def:safes}
The ROA for the proposed robust MPC, denoted by $\mathcal{R}$, is defined as the 
union of the $N_t$-Step Robust Controllable Sets to the terminal set $\mathcal{X}_N$ under the policy \eqref{eq:cl_control}, for $N_t \in \{1,2,\dots,N\}$. 
\end{definition}
An inner approximation to the ROA, which we call the approximate ROA, can be obtained using the approximate $N$-Step Robust Controllable Sets.
\begin{assumption}\label{assump:stagecost} 
The matrices $P$ and $R$ in $\ell(x, u) = x^\top P x + u^\top R u$ are positive definite, i.e., $P\succ0$ and $R\succ0$.
\end{assumption}
\begin{assumption}\label{assump: termcost}
The matrix $P_N$ which defines the terminal cost in \eqref{eq:MPC_R_fin_trac} is chosen as
a matrix $P_N \succ 0$ satisfying 
\begin{align}\label{eq:lmi_ly}
 & x^\top \Big (-P_N + (P+K^\top R K) + \bar{A}_\mathrm{cl}^\top P_N \bar{A}_\mathrm{cl} \Big )x \leq 0
\end{align}
for all $x \in \mathcal{X}_N$, where $\bar{A}_\mathrm{cl} = \bar{A} + \bar{B}K$.
\end{assumption}
\begin{definition}[ISS Lyapunov Function \cite{lin1995various}]\label{iss_def}
Consider the closed-loop system given by
\begin{align}\label{eq:cl_loop_system}
    x_{t+1} = Ax_t + B{u}^\star_{t|t}(x_t) + w_t,~\forall t\geq 0.
\end{align}
Then the origin is called Input to State Stable (ISS), with a ROA $\mathcal{R} \subset \mathbb{R}^{d}$, if there exists class-$\mathcal{K}_\infty$ functions $\alpha_1(\cdot)$, $\alpha_2(\cdot)$, $\alpha_3(\cdot)$, a class-$\mathcal{K}$ function $\sigma(\cdot)$ and a function $V(\cdot): \mathbb{R}^d \mapsto \mathbb{R}_{\geq 0}$ continuous at the origin, such that, 
\begin{align*}
    & \alpha_1(\Vert x \Vert ) \leq V(x) \leq \alpha_2(\Vert x \Vert ),~\forall x \in \mathcal{R},\\
    & V(x_{t+1}) - V(x_t) \leq -\alpha_3(\Vert x_t \Vert) + \sigma(\Vert \tilde{w}_i \Vert_{\mathcal{L}_\infty}),
\end{align*}
where $\tilde{w}_i = \Delta^\mathrm{tr}_A x_i + \Delta^\mathrm{tr}_B u_i + w_i$ and $\Vert \tilde{w}_i \Vert_{\mathcal{L}_\infty} = \sup_{i \in \{0,\dots,t\}}\Vert \tilde{w}_i \Vert$. Function $V(\cdot)$ is called an ISS Lyapunov function for \eqref{eq:cl_loop_system}.
\end{definition}
\begin{theorem}\label{isstheorem}
Let Assumptions~\ref{assump:stable}-\ref{assump: termcost} hold and let $x_0 \in \mathcal{R}$. Then, the optimal cost of \eqref{eq:MPC_R_fin_trac} with \eqref{eq:nstar}, i.e., $J^{\star}(x_t)$ is an ISS Lyapunov function for the closed-loop system \eqref{eq:cl_loop_system}. This guarantees Input to State Stability of the origin of \eqref{eq:cl_loop_system}. 
\end{theorem}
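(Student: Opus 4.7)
The plan is to verify the three defining properties of an ISS Lyapunov function in Definition~\ref{iss_def} for $V(\cdot) = J^\star(\cdot)$ on the ROA $\mathcal{R}$, using the adaptive-horizon structure to handle recursive feasibility-induced cases.

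\textbf{Lower and upper bounds.} First, since $P \succ 0$ by Assumption~\ref{assump:stagecost}, any feasible cost in~\eqref{eq:MPC_R_fin_trac} is at least the first stage cost, so $J^\star(x_t) \geq x_t^\top P x_t \geq \lambda_{\min}(P)\|x_t\|^2$, giving $\alpha_1(r) = \lambda_{\min}(P) r^2 \in \mathcal{K}_\infty$. For the upper bound, Assumption~\ref{assump:orig_in} and the construction of $\mathcal{X}_N$ as a maximal RPI set imply $0 \in \mathrm{int}(\mathcal{X}_N)$; hence for $x$ in a neighborhood of the origin inside $\mathcal{X}_N$, taking $N_t = 1$ with nominal input $\bar{u}_{t|t} = Kx$ is feasible, and the induced cost is $x^\top (P + K^\top R K) x + x^\top \bar{A}_{\mathrm{cl}}^\top P_N \bar{A}_{\mathrm{cl}} x \leq x^\top P_N x$ by Assumption~\ref{assump: termcost}. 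Using boundedness of $\mathcal{R}$ (inherited from compactness of $\mathcal{X}$) together with continuity of the optimal cost at the origin, one extends this to a global $\alpha_2 \in \mathcal{K}_\infty$ with $J^\star(x) \leq \alpha_2(\|x\|)$ on $\mathcal{R}$.

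\textbf{Descent inequality.} The crux is to construct a feasible candidate at $t+1$ whose cost can be bounded by $J^\star(x_t)$ modulo terms depending only on $\|\tilde w_t\|$. I split into two cases mirroring Theorem~\ref{thm1}. In Case A ($N_t^\star = 1$), the exact uncertainty propagation in~\eqref{n1} guarantees $x_{t+1} \in \mathcal{X}_N$, so I use the candidate $N_{t+1} = 1$ with $\bar{u}_{t+1|t+1}^{\mathrm{cand}} = Kx_{t+1}$, whose cost is at most $x_{t+1}^\top P_N x_{t+1}$ by Assumption~\ref{assump: termcost}. Writing $x_{t+1} = \bar{x}_{t+1|t}^\star + \tilde{w}_t$ and expanding,
\begin{align*}
J^\star(x_{t+1}) - J^\star(x_t) \leq -\ell(x_t, u_t^\star) + 2\bar{x}_{t+1|t}^{\star\top} P_N \tilde{w}_t + \tilde{w}_t^\top P_N \tilde{w}_t,
\end{align*}
and Young's inequality with a small $\epsilon > 0$ absorbs $2\bar{x}_{t+1|t}^{\star\top} P_N \tilde{w}_t$ into a small fraction of $\ell(x_t, u_t^\star)$ (using $\|\bar{x}_{t+1|t}^\star\|^2 \leq c_1\|x_t\|^2 + c_2\|u_t^\star\|^2$) plus a $\mathcal{K}$-function of $\|\tilde w_t\|$. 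In Case B ($N_t^\star \geq 2$), I take the shifted candidate of horizon $N_t^\star - 1$ with gains $M_{k,l|t+1}^{\mathrm{cand}} = M_{k,l|t}^\star$ and nominal inputs $\bar{u}_{k|t+1}^{\mathrm{cand}} = \bar{u}_{k|t}^\star + M_{k,t|t}^\star \tilde{w}_t$, so that the realized closed-loop input sequence and hence the robust tightenings carry over; feasibility follows from the same argument that underlies Theorem~\ref{thm1}. The resulting cost satisfies a similar bound with the first stage cost $\ell(x_t, u_t^\star)$ dropped and a perturbation that is $\mathcal{K}$ in $\|\tilde{w}_t\|$. Combining both cases with $\ell(x_t,u_t^\star) \geq \lambda_{\min}(P)\|x_t\|^2$ yields
\begin{align*}
J^\star(x_{t+1}) - J^\star(x_t) \leq -\alpha_3(\|x_t\|) + \sigma(\|\tilde{w}_i\|_{\mathcal{L}_\infty}),
\end{align*}
with $\alpha_3(r) = \tfrac{1}{2}\lambda_{\min}(P) r^2$ and a suitable $\sigma \in \mathcal{K}$, then Definition~\ref{iss_def} gives ISS of the origin.

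\textbf{Main obstacle.} The delicate step is Case B: the nominal trajectory starting from $x_{t+1}$ is offset from $\bar{x}_{\cdot|t}^\star$ by the realized $\tilde w_t$, which is bounded by $\tilde w_{\max}$ but may be in directions not accounted for by the scalar norm bound used in~\eqref{ng1}. The candidate affine-in-disturbance policy must be chosen so that this offset is exactly compensated by the first-column feedback gains $M_{k,t|t}^\star$, which keeps the robust tightenings intact while introducing only $\tilde w_t$-dependent perturbations in the cost. Verifying that these perturbations collapse to a single class-$\mathcal{K}$ function independent of $x_t$, uniform over the adaptive horizon switching (including transitions $N_t^\star \geq 2 \to N_{t+1}^\star = 1$), is the main bookkeeping challenge.
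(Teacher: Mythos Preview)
Your proposal is correct and follows the same two-case decomposition ($N_t^\star = 1$ vs.\ $N_t^\star \geq 2$) and the same candidate policies as the paper (terminal controller $Kx_{t+1}$ in Case~A; shifted disturbance-feedback policy with $\bar u_{k|t+1}=\bar u_{k|t}^\star + M_{k,t|t}^\star \tilde w_t$ in Case~B). The one substantive difference is how the perturbation term is handled. You expand the quadratic explicitly and invoke Young's inequality with a small $\epsilon$ to absorb the cross term $2\bar x_{t+1|t}^{\star\top}P_N\tilde w_t$ into a fraction of the stage cost; this is what produces the factor $\tfrac12$ in your $\alpha_3$ and the uniform-over-horizon bookkeeping you flag as the main obstacle. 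The paper instead packages the entire tail cost as a single function $q(\cdot)$, observes that it is a sum of quadratics evaluated on the compact constraint set $\mathcal X$ and hence Lipschitz with some constant $L_q$, and bounds the perturbation in one line as $q(\bar x_{t+1|t}^\star+\tilde w_t)-q(\bar x_{t+1|t}^\star)\leq L_q\|\tilde w_t\|$. This yields $\sigma(r)=L_q r$ directly, with no absorption of stage cost and no $\epsilon$-tuning; uniformity over the finitely many horizon lengths is immediate by taking the maximum $L_q$. Your route gives more explicit constants, while the paper's Lipschitz shortcut sidesteps precisely the obstacle you identified. For the upper bound $J^\star\leq\alpha_2(\|x\|)$, the paper simply invokes the parametric-QP continuity result of \cite[Theorem~23]{Goulart2006} rather than building it from $0\in\mathrm{int}(\mathcal X_N)$ as you do; both are valid.
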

\begin{proof}
See Appendix.
\end{proof}
\section{Numerical Simulations}\label{sec:numerics}
We choose $N=5$ and compute approximate solutions to the example problem given in \cite{bujarbaruah2020robust}.
The feedback gain $K$ satisfying Assumption~\ref{assump:stable} is chosen as $K= -[0.4866, 0.4374]$. The source codes are at \href{https://github.com/monimoyb/RMPC_SimpleTube}{\texttt{https://github.com /monimoyb/RMPC\_SimpleTube}}. 

\subsection{Comparison with \cite{langson2004robust}}
The tube cross section ($Z$) is chosen as the minimal robust positive invariant set \cite[Definition~3.4]{kouvaritakis2016model} for system \eqref{eq:unc_system} under a feedback $u = -[0.7701, 0.7936]x$, and the terminal set ($\mathcal{X}_f$) is chosen as $\mathcal{X}_N$ in \eqref{eq:term_set_DF}. See \cite{langson2004robust} for details on these quantities. We then choose a set of $N_\mathrm{init} = 100$ initial states $x_S$, created by a $10 \times 10$ uniformly spaced grid of the set of state constraints. From  each of these initial state samples we check the feasibility of the tube MPC problem in \cite[Section~5]{langson2004robust}. The code to solve the tube MPC is used from \cite{slscode}.  The convex hull of the feasible initial states (largest out of horizons $N\leq5$) inner approximates the ROA of the tube MPC. This is compared to the approximate ROA of our proposed robust MPC. The comparison is shown in Fig.~\ref{fig:init_sample_sls}. The approximate ROA from our approach is about 1.05x \emph{larger} in volume, but containing $98\%$ of that of the tube MPC.
\begin{figure}[h!]
\textcolor{yellow}{~~\!\textbf{$\blacksquare$}} Approx. ROA of Proposed Robust MPC \\ \textcolor{gray}{\textbf{$\blacksquare$}} Approx. ROA of Tube MPC in \cite{langson2004robust}\\[0.2cm]
	\centering
	\includegraphics[width=0.74\columnwidth]{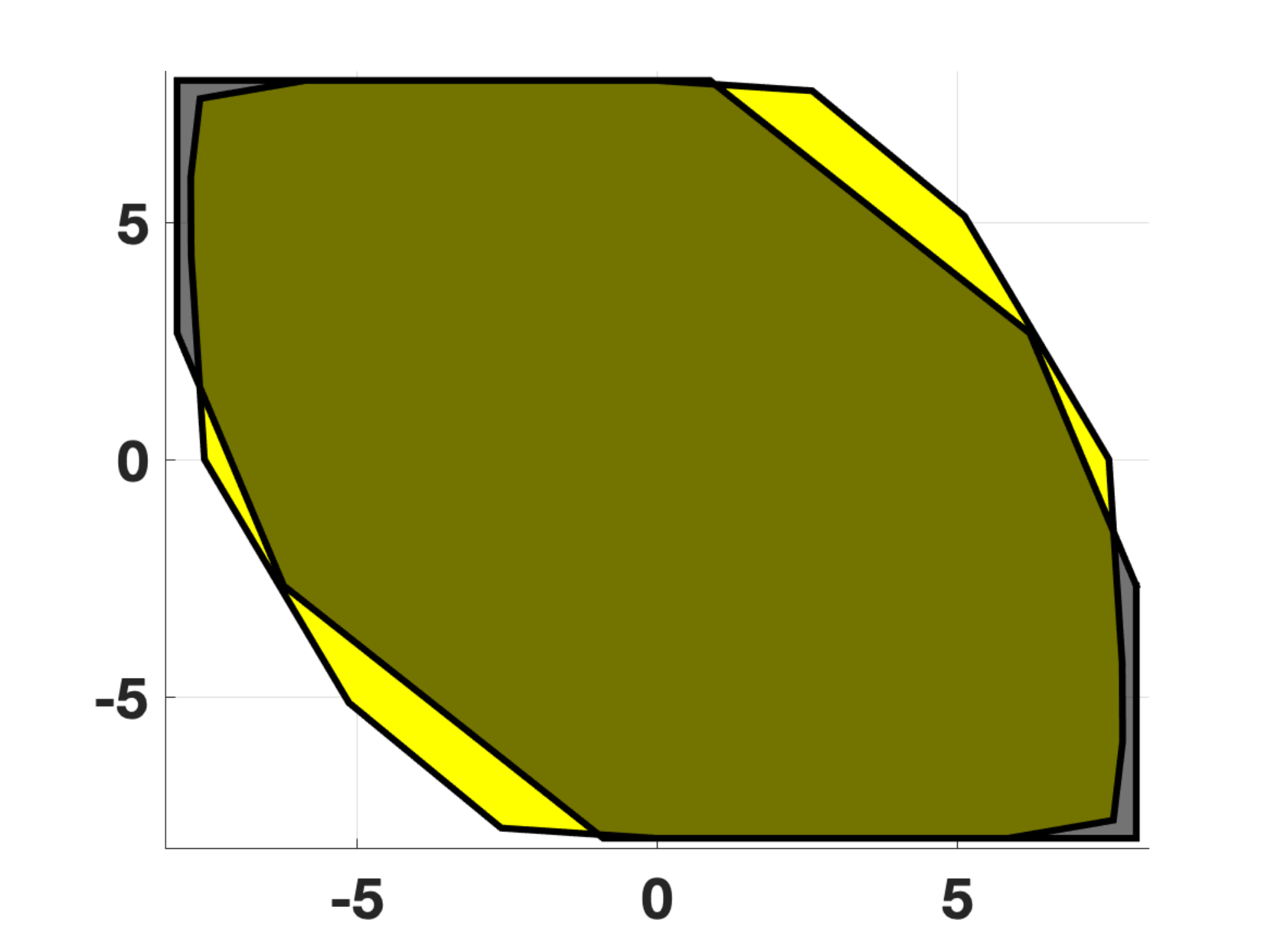}
	\caption{Comparison of the approximate ROA. 
	}
	\label{fig:init_sample_sls}    
\end{figure}
However, for any $N \leq 5$, the tube MPC needs higher computation times than for all $N_t \in \{1,2,\dots, N\}$ combined in our approach. This is shown in Table~\ref{tab:compTime}. 
\begin{table}[h]
\caption{Avg. online computation times [sec]. Values are obtained with a MacBook Pro 16inch, 2019, 2.3 GHz 8-Core Intel Core i9, 16 GB memory, using the Gurobi solver.}
\label{tab:compTime}
  \begin{center} 
  \begin{tabular}{|c|c|c|}
    \hline
   {\textbf{Horizon}} & \textbf{Proposed Robust MPC} & \textbf{Tube MPC in \cite{langson2004robust}}\\
    \hline
     $N_t=1$   &  0.0026  &  0.0062\\ \hline
     $N_t=2$   &  0.0023  &  0.0753\\ \hline
     $N_t=3$   &  0.0038  & 0.1612\\ \hline
     $N_t=4$   &  0.0056  &  0.2556\\ \hline
     $N_t=5$   &  0.0078  & 0.3384\\ \hline
  \end{tabular}
  \end{center}
  \vspace{-18pt}
\end{table}
\begin{remark} 
See \cite{bujarbaruah2020robust} on how to outperform the tube MPC both in conservatism and online computational complexity.
\end{remark}
\subsection{Roll-Out Alternative and Comparison with~\cite{dean2018safely}}\label{disc:II}
A computationally cheaper alternative can be obtained as follows: Once an optimization problem in \eqref{eq:MPC_R_fin_trac} at time step $t=0$ is feasible for some horizon length $N_0 = \bar{N}_0 \in \{1,2,\dots,N\}$, the corresponding optimal policy sequence: $\{u^\star_{0|0}, u^\star_{1|0}(\cdot), \dots, u^\star_{\bar{N}_0-1|0}(\cdot)\}$ can be used to obtain a 
safe open-loop policy for all time steps as:
\begin{align}\label{sarahpol}
  \Pi^\mathrm{safe}_\mathrm{ol}(x_t) = \begin{cases} u^\star_{t|0}(x_t), &\mbox{if } t \leq (\bar{N}_0-1), \\
    Kx_t, & \textnormal{otherwise}. \end{cases}
\end{align}
Policy \eqref{sarahpol} maintains the robust satisfaction of \eqref{eq:FTOCP_constr} for all time steps, without re-solving \eqref{eq:MPC_R_fin_trac}.
From  each of the previous 100 initial state samples, we now check the feasibility of the constrained LQR synthesis problem in \cite[Section~2.3]{dean2018safely}. We pick the FIR length (same as control horizon length) as $L=  15$,
with $\tau = 0.99$ and $\tau_\infty = 0.2$. See \cite[Problem~2.8]{dean2018safely} for details on these parameters. 
The comparison of the approximate $\bar{N}_0$-Step Robust Controllable Sets and the approximate region of attraction of the algorithm of \cite[Section~2.3]{dean2018safely} is shown in Fig.~\ref{fig:ol_samples}. 
The volumes of the approximate $\bar{N}_0$-Step Robust Controllable Sets are bigger than the approximate ROA of the controller in \cite[Section~2.3]{dean2018safely} for all $\bar{N}_0 \leq 5$, showing that the roll-out policy \eqref{sarahpol} yields up to approximately 12x lower conservatism.  
\begin{figure}[h]
\centering\textcolor{yellow}{\textbf{$\blacksquare$}} Approx.\ $\bar{N}_0$-Step Robust Controllable Set \\ \textcolor{blue}{\textbf{$\blacksquare$}} Approx.\ ROA of Controller in \cite{dean2018safely}\\[0.2cm]
\captionsetup[subfigure]{}
\centering
    \subfloat[$\bar{N}_0 = 2,3,4$ (comparable sets)]{%
        \includegraphics[width=0.5\columnwidth]{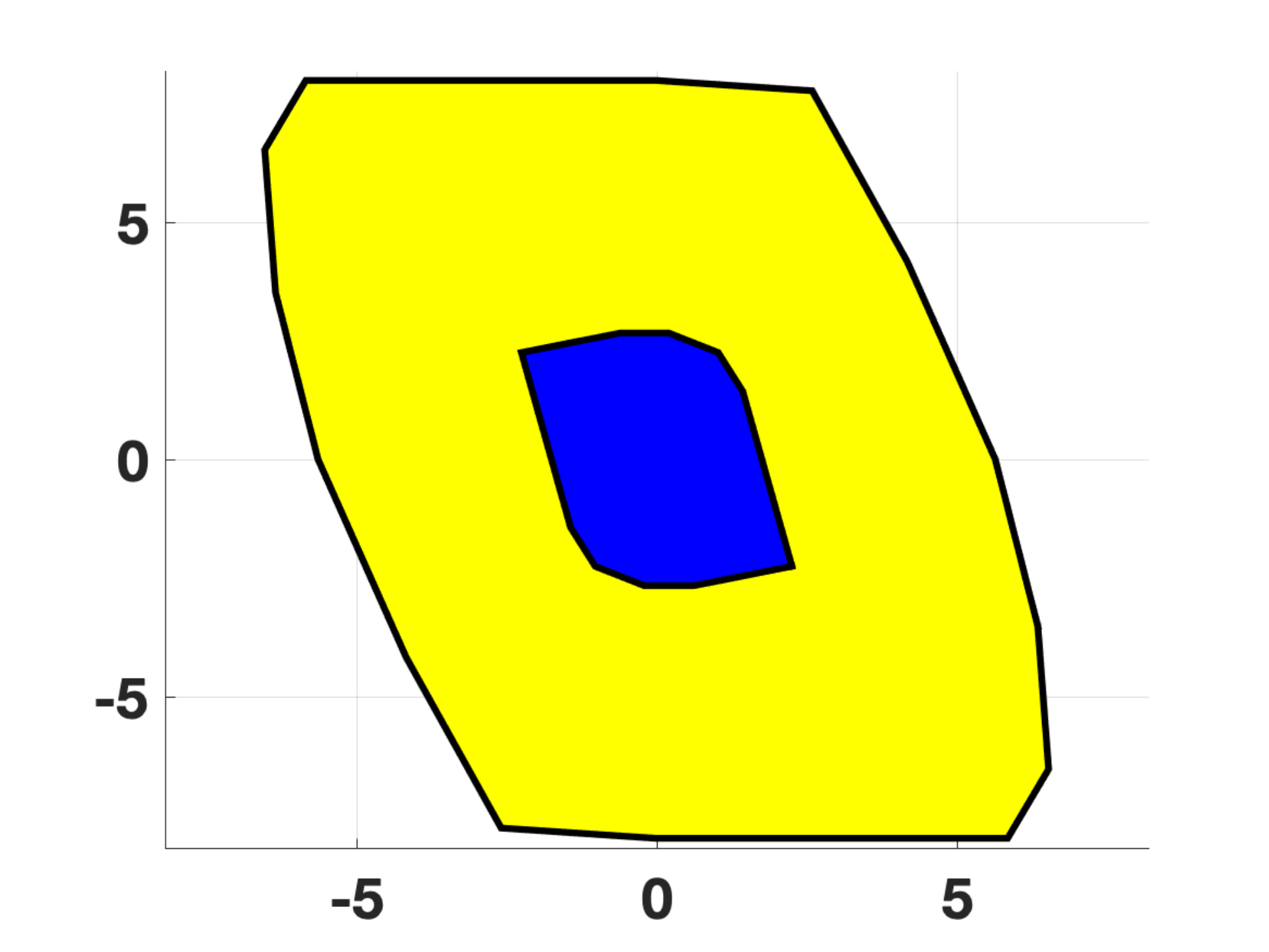}
        \label{fig:motionfig:dry}
    }
    \subfloat[$\bar{N}_0 = 5$]{%
        \includegraphics[width=0.5\columnwidth]{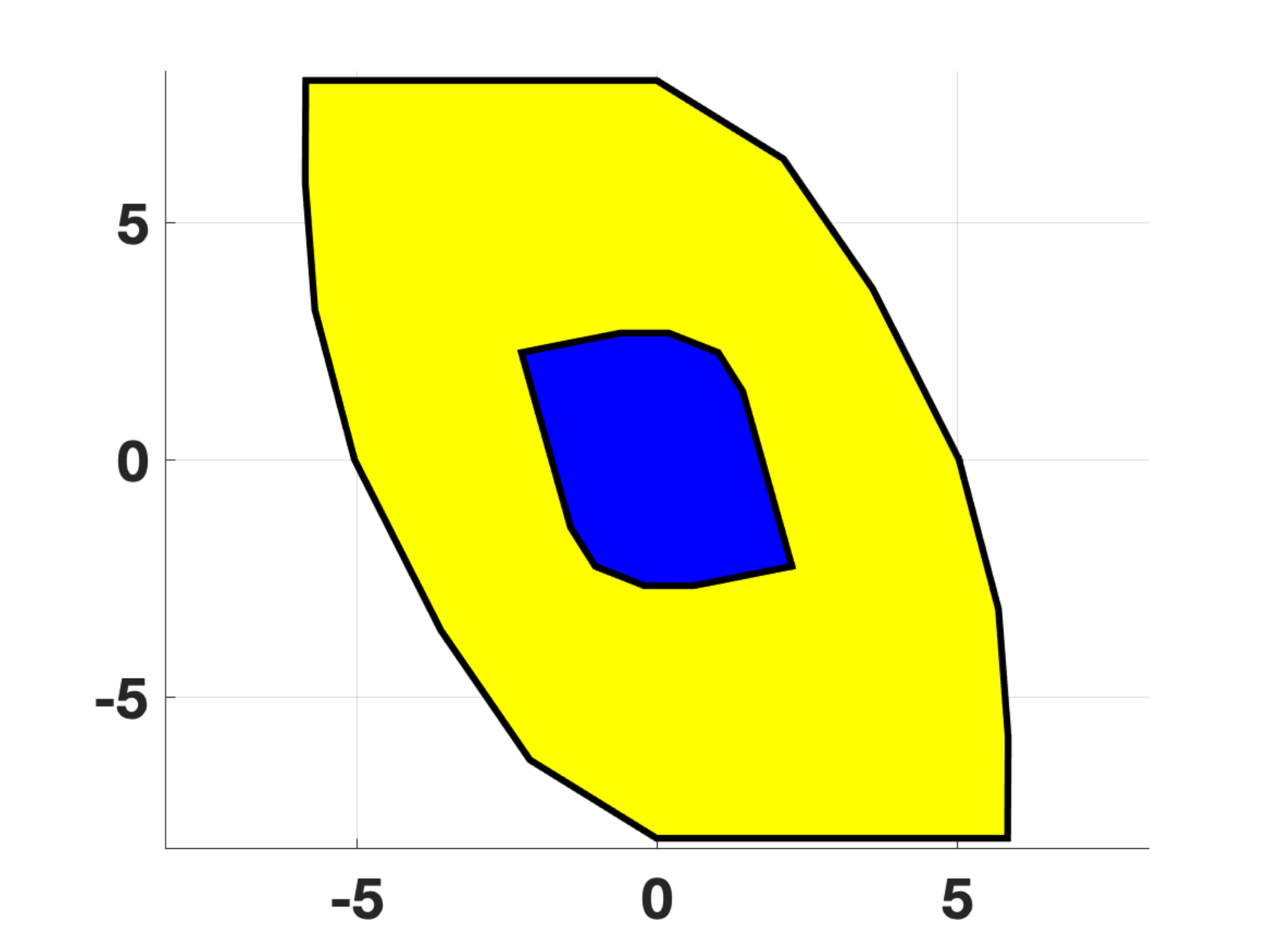}
        \label{fig:motionfig:wet}
    }
    \caption{A safe open-loop policy \eqref{sarahpol} is guaranteed to exist at all times with initial states in the yellow regions.}
    \label{fig:ol_samples}
    \vspace{-10pt}
\end{figure}
\section{Conclusions}
We proposed a computationally efficient approach to design a robust MPC for constrained uncertain linear systems. The uncertainty considered included both mismatch in the system dynamics matrices, and an additive disturbance. The designed MPC is recursively feasible and the origin of the closed-loop system is Input to State stable. 
With numerical simulations, we demonstrated that the proposed approach can be a simple and viable alternative to balance the trade-off between computational complexity and conservatism in robust MPC design under parametric model uncertainty. 
\renewcommand{\baselinestretch}{1}
\bibliographystyle{IEEEtran}
\bibliography{bibliography.bib}

\section*{Appendix}
\subsection{Matrices in \eqref{ng1}}
As in~\cite{Goulart2006}, matrices $\bar{\mathbf{A}}$, $\mathbf{C}$ and $\mathbf{G}$ for a horizon $\bar{N}$ are: $\mathbf{G} = I_{d\bar{N}} + \sum_{k=1}^{\bar{N}-1}\mathcal{L}_{\bar N}^k \otimes \bar{A}^k ,\bar{\mathbf{A}} = \mathrm{diag}(\bar{A}, \bar{A}^2, \dots, \bar{A}^{\bar{N}-1})$, and  $\mathbf{C} = \mathbf{G} \cdot (I_{\bar{N}} \otimes \bar{B})$, with $\mathcal{L}$ being the lower shift matrix.
\subsection{Proof of Theorem~\ref{thm1}}
Assume that at time step $t$ problem \eqref{eq:MPC_R_fin_trac} is feasible, and let $N_t^\star$ be the optimal horizon. We then consider: 

\noindent \textbf{Case 1: ($N^\star_t = 1$)}
Consider the robust state constraints \eqref{n1}:
\begin{align}\label{eq:state_conN1}
    \max_{\substack{{w}_t \in \mathbb{W}\\ \Delta_A \in \mathcal{P}_A, \Delta_B \in \mathcal{P}_B}}  \!\!\!\! \!\!\!\! \!\!H_N^x((\bar{{A}}+{\Delta}_A) {x}_t + ({\bar{B}} + {\Delta}_B) \bar{\mathbf{u}}^{(1)}_t + {w}_t) \leq h_N^x.
\end{align}
We find $h_N^x$ where the max is attained by using duality. 
Let us denote the corresponding optimal input policy by 
\begin{align}\label{lem_proof_pol}
u^\star_{t|t}(x_t) = \bar{u}^\star_{t|t}.
\end{align}
Now, let policy \eqref{lem_proof_pol} be applied to \eqref{eq:unc_system} in closed-loop, so that the system reaches the terminal set $\mathcal{X}_N$. Consider solving \eqref{eq:state_conN1} at this step with a horizon length of $N_{t+1}=1$. As \eqref{ng1} uses the same representation of the uncertainty as done in Section~\ref{ssec:term_set}, a candidate policy at time step $(t+1)$ is 
\begin{align}\label{eq:can_tp1}
    u_{t+1|t+1}(x_{t+1}) = Kx_{t+1},
\end{align}
which is a feasible solution to \eqref{eq:MPC_R_fin_trac} under constraint \eqref{eq:state_conN1}. 

\noindent \textbf{Case 2: ($N^\star_t \geq 2$)}
Let us denote the sequence of optimal input policies from $t$ as $\{u^\star_{t|t},u^\star_{t+1|t}(\cdot),\cdots,u^\star_{t+N^\star_t-1|t}(\cdot)\}$. Consider a candidate policy sequence at the next time instant:
\begin{align}\label{eq:feas_seq_next_DF_sto}
    U_{t+1}(\cdot) = \{u^\star_{t+1|t}(\cdot),\dots,u^\star_{t+N^\star_t-1|t}(\cdot)\}.
\end{align}
Now using standard MPC shifting arguments \cite{chisci2001systems, langson2004robust, Goulart2006}, sequence \eqref{eq:feas_seq_next_DF_sto} is a feasible policy sequence at time step $(t+1)$ for problem \eqref{eq:MPC_R_fin_trac}, with horizon length $N_{t+1} = N^\star_t -1$. 
\subsection{Proof of Theorem~\ref{isstheorem}}
From Assumption~\ref{assump:stagecost} we know that, $\alpha_1(\Vert x_t \Vert_2) \leq \ell(x,0) \leq J^{\star}(x_t)$ for some $\alpha_1(\cdot) \in \mathcal{K}_\infty$ and for all $x \in \mathcal{R}$. Moreover, since \eqref{eq:MPC_R_fin_trac} can be reformulated into a parametric QP for each horizon length $N_t$, constraint set \eqref{eq:FTOCP_constr} is compact, and $J^{\star}(0) = 0$, from \cite[Theorem~23]{Goulart2006}, we know $J^{\star}(x_t) \leq \alpha_2(\Vert x_t \Vert_2)$ for some $\alpha_2(\cdot) \in \mathcal{K}_\infty$ and for all $x_t \in \mathcal{R}$. We complete the proof by considering the same two cases : \vspace{3pt}\\
\noindent \textbf{Case 1: ($N^\star_t = 1$)}
Consider the case of $N^\star_t = 1$. The optimal nominal cost at time step $t$ is written as 
\begin{subequations}
\begin{align}
J^{\star}(&x_{t}) \nonumber 
= \ell(\bar{x}^\star_{t|t},\bar{u}^\star_{t|t}) + (\bar{x}^\star_{t+1|t})^\top P_N \bar{x}^\star_{t+1|t} \nonumber \\
&  \geq \ell(\bar{x}^\star_{t|t},\bar{u}^\star_{t|t}) + \ell(\bar{x}^\star_{t+1|t},K\bar{x}^\star_{t+1|t}) + \nonumber \\
&~ + ((\bar{A}+\bar{B}K)\bar{x}^\star_{t+1|t})^\top P_N ((\bar{A}+\bar{B}K)\bar{x}^\star_{t+1|t})\label{assump:4used}\\
& = \ell(\bar{x}^\star_{t|t},\bar{u}^\star_{t|t}) + q(\bar{x}^\star_{t+1|t}), \label{eq:iss12}
\end{align}
\end{subequations}
where in \eqref{assump:4used} we have used Assumption~\ref{assump: termcost}, and at time step $(t+1)$ the feasible input $\bar{u}_{t+1|t} = K \bar{x}^\star_{t+1|t}$ as discussed in \eqref{eq:can_tp1}. As \eqref{eq:can_tp1} is a feasible policy at time step $(t+1)$ with horizon length $N_{t+1} = 1$, the optimal cost of the MPC problem for any horizon length $N^\star_{t+1}=\{1,2,\dots,N\}$ can be bounded from above as:
\begin{align}\label{eq:iss22}
 J^{\star}(x_{t+1}) & \leq \ell(\bar{x}_{t+1|t+1},\bar{u}_{t+1|t}(\bar{x}_{t+1|t+1})) + Q(\bar{x}_{t+2|t+1}) \nonumber \\
&= q(\bar{x}_{t+1|t+1}),
\end{align}
with $\bar{x}_{t+1|t+1}= \bar{x}^\star_{t+1|t} + \tilde{w}_t,~\textnormal{with } \tilde{w}_t = \Delta^\mathrm{tr}_A x_t + \Delta^\mathrm{tr}_B \bar{u}^{\star}_{t|t} + w_t$. Combining \eqref{eq:iss12}--\eqref{eq:iss22} we obtain:
\begin{equation}\label{iss_proof_12}
\begin{aligned}
    & J^{\star}(x_{t+1}) - J^{\star}(x_t)  \\
    & \leq q(\bar{x}^\star_{t+1|t} + \tilde{w}_t) - \ell(\bar{x}^\star_{t|t},\bar{u}^\star_{t|t}) - q(\bar{x}^\star_{t+1|t})\\
    & \leq -\alpha_3(\Vert x_t \Vert_2 ) + L_q \Vert \tilde{w}_i \Vert_{\mathcal{L}_\infty},
\end{aligned}
\end{equation}
where $q(\cdot)$ is $L_q$-Lipschitz as it is a sum of quadratics in $\mathcal{X}$. \vspace{3pt}\\
\noindent \textbf{Case 2: ($N^\star_t \geq 2$)}
Now consider 
\begin{align}\label{eq:iss1}
J^{\star}(x_{t}) & = \sum \limits_{k=t}^{t+N^\star_t-1} \ell(\bar{x}^\star_{k|t},\bar{u}^\star_{k|t}) + Q(\bar{x}^\star_{t+N^\star_t|t}) \nonumber \\
& = \ell(\bar{x}^\star_{t|t},\bar{u}^\star_{t|t}) + q(\bar{x}^\star_{t+1|t}),
\end{align}
where $\{\bar{x}^\star_{t|t},\bar{x}^\star_{t+1|t}, \dots,\bar{x}^\star_{t+N^\star_t|t}\}$ is the optimal predicted nominal trajectory under the optimal nominal input sequence $\{\bar{u}^\star_{t|t}, \bar{u}^\star_{t+1|t}, \dots, \bar{u}^\star_{t+N^\star_t-1|t}\}$, where $\bar{u}^\star_{k|t} = u^\star_{k|t}(\bar{x}^\star_{k|t})$ for all $k \in \{t,t+1,\dots, t+(N^\star_t-1)\}$. The quantity $q(\bar{x}^\star_{t+1|t})$ provides the total nominal cost from time step $(t+1)$ to $(t+N^\star_t)$ under the following optimal control policy
\begin{equation}\label{eq:feas_pol_iss_ugo}
    \{{u}^\star_{t+1|t}(\cdot), \ldots, u^\star_{t+N_t^*-1|t}(\cdot)\}.
\end{equation}
We know that \eqref{eq:feas_seq_next_DF_sto} is a feasible policy sequence for \eqref{eq:MPC_R_fin_trac} at time step $(t+1)$ with horizon length $N_{t+1} = (N^\star_t -1)$. After $\bar{x}_{t+1} = x_{t+1}$ is obtained with closed-loop system evolution \eqref{eq:cl_loop_system}, with this feasible policy sequence \eqref{eq:feas_pol_iss_ugo}, the optimal nominal cost of \eqref{eq:MPC_R_fin_trac} at time step $(t+1)$ for any $N^\star_{t+1} \in \{1,2,\dots,N\}$ can be bounded as:
\begin{align}\label{iss2}
    & J^{\star}(x_{t+1}) \leq 
    \sum \limits_{k=t+1}^{t+N^\star_t-1} \ell(\bar{x}_{k|t+1},{u}^\star_{k|t}(\bar{x}_{k|t+1})) + Q(\bar{x}_{t+N^\star_t|t+1})\nonumber \\
    &~~~~~~~~~~~ = q(\bar{x}_{t+1|t+1}),
\end{align}
where we have used the feasible nominal trajectory obtained with the policy  \eqref{eq:feas_pol_iss_ugo}, given as
\begin{align*}
\bar{x}_{k|t+1} = \bar{A}^{k-t-1}(\bar{A}x_t & + \bar{B}{u}_{t|t}^\star(x_t) + \tilde{w}_t) +  \\ 
&  + \sum \limits_{i=t+1}^{k-1} \bar{A}^{k-1-i}\bar{B} u^\star_{i|t}(\bar{x}_{k|t+1}),
\end{align*}
for $k = \{t+2,t+3,\dots,t+N^\star_t\}$, 
Moreover, we know that
\begin{align}\label{iss3}
    \bar{x}_{t+1|t+1}  = \bar{x}^\star_{t+1|t} + \tilde{w}_t,
\end{align}
with $\tilde{w}_t = \Delta^\mathrm{tr}_A x_t + \Delta^\mathrm{tr}_B \bar{u}^{\star}_{t|t} + w_t$. Combining \eqref{eq:iss1}--\eqref{iss3}:
\begin{equation}\label{iss_proof_1}
\begin{aligned}
    & J^{\star}(x_{t+1}) - J^{\star}(x_t)  \\
    & = q(\bar{x}^\star_{t+1|t} + \tilde{w}_t) - \ell(\bar{x}^\star_{t|t},\bar{u}^\star_{t|t}) - q(\bar{x}^\star_{t+1|t})\\
    & \leq - \ell(\bar{x}^\star_{t|t},\bar{u}^\star_{t|t}) + L_q \Vert \tilde{w}_t \Vert\leq - \ell(\bar{x}^\star_{t|t},0) + L_q \Vert \tilde{w}_t \Vert\\
    & \leq -\alpha_3(\Vert x_t \Vert_2 ) + L_q \Vert \tilde{w}_i \Vert_{\mathcal{L}_\infty}.
\end{aligned}
\end{equation}
Combining \eqref{iss_proof_12} and \eqref{iss_proof_1}, the origin of \eqref{eq:cl_loop_system} is ISS according to Definition~\ref{iss_def}, as the optimal cost function $J^\star(\cdot)$ is an ISS Lyapunov function. This completes the proof. 

\section*{Acknowledgements}
We thank Sarah Dean for constrained LQR source codes. Sponsors: ONR-N00014-18-1-2833, NSF-1931853, Marie Sk\l{}odowska-Curie grant 846421, and Ford motor company.
\end{document}